\newcommand{\nn}{\nonumber \\}
\newcommand{\bra}[1]{\langle{#1}|}
\newcommand{\ket}[1]{|{#1}\rangle}
\newcommand{\braket}[2]{\langle{#1}|{#2}\rangle}
\newcommand{\poly}{\mathop{\mathrm{poly}}}
\newcommand{\diag}{\mathop{\mathrm{diag}}}
\newcommand{\matr}{{\cal A}}
\newcommand{\dt}{h}
\newcommand{\s}{z}
\def \A {{\cal A}}
\def \B {{\cal B}}
\newtheorem{theorem}{Theorem}
\newtheorem{lemma}[theorem]{Lemma}
\newtheorem{definition}[theorem]{Definition}
\begin{document}

\title{High-order quantum algorithm for solving linear differential equations}

\author{Dominic W. Berry}
\affiliation{Department of Physics and Astronomy, Macquarie University, Sydney, NSW 2109, Australia}
\affiliation{Institute for Quantum Computing, University of Waterloo, Waterloo, Ontario N2L 3G1, Canada}

\begin{abstract}
Linear differential equations are ubiquitous in science and engineering.
Quantum computers can simulate quantum systems, which are described by a restricted type of linear differential equations.
Here we extend quantum simulation algorithms to general inhomogeneous sparse linear differential equations, which describe many classical physical systems.
We examine the use of high-order methods to improve the efficiency.
These provide scaling close to $\Delta t^{2}$ in the evolution time $\Delta t$.
As with other algorithms of this type, the solution is encoded in amplitudes of the quantum state, and it is possible to extract global features of the solution.
\end{abstract}
\date{\today}

\maketitle

\section{Introduction}
Differential equations are used for an enormous variety of applications, including industrial design and weather prediction.
In fact, many of the main applications of supercomputers are in the form of large systems of differential equations \cite{super}.
Therefore quantum algorithms for solving differential equations would be extraordinarily valuable.
A quantum algorithm for differential equations was proposed in Ref.\ \cite{Leyton08}, but that algorithm had very poor scaling in the time.
The complexity of the simulation scaled exponentially in the number of time-steps over which to perform the simulation.

The algorithm in Ref.\ \cite{Leyton08} may have been overly ambitious, because it aimed to solve nonlinear differential equations.
A more natural application for quantum computers is \emph{linear} differential equations.
This is because quantum mechanics is described by linear differential equations.
We find that, when we restrict to linear differential equations, it is possible to obtain an algorithm that is far more efficient than that proposed in Ref.\ \cite{Leyton08}.

We consider sparse systems of first-order linear differential equations.
Using standard techniques, any linear differential equation with higher-order derivatives can be converted to a first-order linear differential equation with larger dimension.
A first-order ordinary differential equation may be written as
\begin{equation}
\bm{\dot x}(t) = A(t)\bm{x}(t) + \bm{b}(t),
\end{equation}
where $\bm{x}$ and $\bm{b}$ are $N_x$-component vectors, and $A$ is an $N_x\times N_x$ matrix (which we take to be sparse).
Classically, the complexity of solving the differential equation must be at least linear in $N_x$.
The goal of the quantum algorithm is to solve the differential equation in time $O(\poly\log N_x)$.

Quantum mechanics is described by differential equations of this form, except they are homogeneous ($\bm{b}(t)=0$), and $A(t)=iH(t)$, where $H(t)$ is Hermitian.
For physical systems where the Hamiltonian is a sum of interaction Hamiltonians, the Hamiltonian is sparse.
The solutions in quantum mechanics only include oscillating terms, whereas more general differential equations have solutions that may grow or decay exponentially.
Quantum algorithms for simulating quantum mechanical systems have been extensively studied \cite{Lloyd96,Aharonov03,Childs04,Berry07,Childs08,Berry09,Wiebe10,Berry13,Jordan12}.

Classical physics is described by more general differential equations.
Large sparse systems of ordinary differential equations are produced by discreti{\s}ation of partial differential equations.
Many equations in physics are linear partial differential equations, where the time derivative depends linearly on spatial derivatives and the value of a quantity at some point in physical space. Examples include Stokes equations (for creeping fluid flow), the heat equation, and Maxwell's equations.
Discreti{\s}ation of the partial differential equation on a mesh of points results in a sparse ordinary differential equation with a very large value of $N_x$.

In the case where $A$ and $\bm{b}$ are time independent, then one can find the steady-state solution of the differential equation by solving
\begin{equation}
A \bm{x} = -\bm{b}.
\end{equation}
A quantum algorithm for this problem was given by Harrow, Hassidim and Lloyd (HHL) \cite{Harrow09}, with runtime that is polynomial in the condition number of $A$ and the log of the dimension.
In particular, its runtime is $\tilde O(\log(N)s^4\kappa^2/\epsilon_L)$, where $s$ is the sparseness of the matrix $A$, $\kappa$ is its condition number, $N$ is the dimension, and $\epsilon_L$ is the allowable error.
The $\tilde O$ in that work is taken to mean that arbitrarily small powers are ignored, so scaling as $t^{1+\delta}$ for arbitrarily small $\delta$ is $\tilde O(t)$.
Ambainis has presented an alternative technique using variable time amplitude amplification \cite{Ambainis10,Ambainis10b,Ambainis12}.
In this work we primarily concentrate on the HHL algorithm, but also examine the technique of \cite{Ambainis10,Ambainis10b,Ambainis12} in Sec.~\ref{sec:var}.

We show how to solve the full time evolution under sparse linear differential equations on a quantum computer.
The goal of this work is to not only provide an algorithm that is efficient in that it scales as the log of the dimension, but to provide efficient scaling in the evolution time $\Delta t$.
To improve the efficiency of the algorithm, we consider high-order linear multistep methods.
Given reasonable conditions on the problem, the complexity scales as $\tilde O(\Delta t^{2})$.
In comparison, high-order methods for Hamiltonian simulation yield scaling close to linear in $\Delta t$ \cite{Berry07};
this is close to optimal, due to the no--fast-forwarding theorem \cite{Berry07}.
There is an interesting open question whether the scaling in $\Delta t$ for differential equations can be further improved.
As in Ref.~\cite{Harrow09}, the final solution is encoded in the amplitudes of a state vector.
The solution is not given explicitly, but in many cases of interest it is expected that useful global information can be extracted from the state.
Methods to achieve this are discussed in Ref.~\cite{Clader}.

\section{Lie-Trotter formula approach}

Before explaining the main approach, we first describe an approach using Lie-Trotter formulae, and the drawback to that approach.
This will not be described rigorously, because it is not the main proposal for solving differential equations.

The homogeneous case, where $\bm{b}=0$, is analogous to Hamiltonian evolution. If $A$ is antiHermitian, then we can take $A=iH$, where $H$ is a Hermitian Hamiltonian.
Evolution under this Hamiltonian can be solved by methods considered in previous work \cite{Berry07,Berry09}.
Another case that can be considered is where $A$ is Hermitian.
In this case, the eigenvalues of $A$ are real, and $A$ can be diagonali{\s}ed in the form $A=V D V^{-1}$, where $D$ is a real diagonal matrix and $V$ is unitary.
The formal solution is then, for $A$ independent of time, $\bm{x}(t)=V e^{D(t-t_0)} V^{-1} \bm{x}(t_0)$.

The differential equation can be solved using a similar method to that used in Ref.\ \cite{Harrow09}. The value of $\bm{x}$ is encoded in a quantum state proportional to
\begin{equation}
\ket{\bm{x}} := \sum_{j=1}^{N_x} x^{[j]} \ket{j},
\end{equation}
where $\ket{j}$ are computational basis states of the quantum computer, and $x^{[j]}$ are the components of the vector $\bm{x}$.
It is convenient to omit the normali{\s}ation for this state.
The state can be written in a basis corresponding to the eigenvectors of $A$:
\begin{equation}
\ket{\bm{x}} = \sum_{j} \psi_j \ket{\lambda_j}.
\end{equation}
Using methods for Hamiltonian simulation, $iA$ can be simulated.
By using phase estimation, if the state is an eigenstate $\ket{\lambda_j}$, then the eigenvalue $\lambda_j$ can be determined.
Given maximum eigenvalue $\lambda_{\rm max}$, we would change the amplitude by a factor of $e^{(t-t_0)(\lambda_j-\lambda_{\rm max})}$.
See Ref.\ \cite{Harrow09} for the method of changing the amplitude. If this is done coherently, then the final state will encode $\bm{x}(t)$.

For more general differential equations, $A$ will be neither Hermitian nor antiHermitian. In this case, one can break $A$ up into Hermitian ($A_{H}$) and antiHermitian ($A_{aH}$) components. The evolution under each of these components can be simulated individually, and the overall evolution simulated by combining these evolutions via the Lie-Trotter formula. The drawback to this approach is that it appears to give a complexity that increases exponentially with the time interval $\Delta t = t-t_0$ (though the complexity is still greatly improved over Ref.\ \cite{Leyton08}).

If $A$ were just Hermitian, then the eigenvector (or eigenspace) corresponding to the largest eigenvalue would not decay, and the system would end up in that state.
Therefore the amplitude would not drop below the amplitude on the eigenspace corresponding to the largest eigenvalue.
That is not the case when $A$ is a more general matrix, because usually the maximum real part of an eigenvalue of $A$ will be strictly less than the maximum eigenvalue of $A_H$.
The amplitude must therefore decay exponentially, because we must allow for the maximum eigenvalue of $A_H$ in simulating evolution under $A_H$.

The result of this is that the complexity of the simulation will scale exponentially in the time that the differential equation needs to be simulated over, $\Delta t$.
The scaling will be considerably improved over that in Ref.\ \cite{Leyton08}, but it is desirable to obtain scaling that is polynomial in $\Delta t$.
Another drawback is that this approach does not enable simulation of inhomogeneous differential equations.

\section{Linear systems approach}

To avoid this problem we propose an approach based on the algorithm for solving linear systems from Ref.\ \cite{Harrow09}.
The trick is to use a ``Feynman's clock'', where an additional register encodes the time \cite{feyn82,feyn85}.
Then we encode the solution of the differential equation at all times using just one state.
That is, we wish to obtain the final state proportional to
\begin{equation}
\label{eq:fineq}
\ket{\psi} := \sum_{j=0}^{N_t} \ket{t_j} \ket{\bm{x}_j}.
\end{equation}
The number $N_t$ is the number of time steps, $t_j$ is the time $t_0+j\dt$, where $\dt$ is the time interval in the discreti{\s}ation of the differential equation, $\bm{x}_j$ is the approximation of the value of $\bm{x}$ at time $t_j$, and $\Delta t$ is the total time interval over which the differential equation is to be solved. We use the subscript $j$ to index the vectors, and superscript for components of these vectors.

Once this state has been created, the state encoding the solution at the final time $t_0+\Delta t$ can be approximated by measuring the register encoding the time and getting that time.
Just using this method, the probability of obtaining the final time is small ($1/(N_t+1)$).
To obtain a significant probability of success, one can add times beyond $t_0+\Delta t$ where $\bm{x}$ is constant.
We take $\bm{x}$ to be constant for $t_0+\Delta t$ to $t_0+2\Delta t$, so $N_t=2\Delta t/\dt$.
Then any measurement result for the time in this interval will give the state corresponding to the solution.
By this method, the probability of success can be boosted significantly, without changing the scaling for $N_t$.

To numerically solve differential equations, the simplest method is the Euler method, which discreti{\s}es the differential equation as
\begin{equation}
\frac{\bm{x}_{j+1}-\bm{x}_j}{\dt} = A(t_j) \bm{x}_j +\bm{b}(t_j).
\end{equation}
For times after $t_0+\Delta t$, we set $\bm{x}_{j+1}=\bm{x}_j$ to ensure that $\bm{x}$ is constant.
It is straightforward to encode this method as a linear system
\begin{equation}
\matr\bm{\vec x} = \bm{\vec b}.
\end{equation}
Here $\bm{\vec x}$ is the vector of blocks $\bm{x}_j$, $\bm{\vec b}$ is a vector of the blocks $\bm{b}$ and an initial condition $\bm{x}_{\rm in}$, and $\matr$ is a matrix describing the discretised differential equation.
For $A$ and $\bm{b}$ independent of time a simple example may be expressed as
\begin{equation}
\label{eq:exmat}
\left[ {\begin{array}{*{20}c}
   \openone & 0 & 0 & 0 & 0  \\
   {-(\openone + A\dt)} & \openone & 0 & 0 & 0  \\
   0 & {-(\openone + A\dt)} & \openone & 0 & 0  \\
   0 & 0 & -\openone & \openone & 0  \\
   0 & 0 & 0 & -\openone & \openone  \\
\end{array}} \right]\left[ {\begin{array}{*{20}c}
   {\bm{x}_0 }  \\
   {\bm{x}_1 }  \\
   {\bm{x}_2 }  \\
   {\bm{x}_3 }  \\
   {\bm{x}_4 }  \\
\end{array}} \right] = \left[ {\begin{array}{*{20}c}
   {\bm{x}_{\rm in} }  \\
   \bm{b}\dt  \\
   \bm{b}\dt  \\
   0  \\
   0  \\
\end{array}} \right].
\end{equation}

Each entry of $\matr$ is a block of the dimension of $A$, and each entry of $\bm{\vec x}$ and $\bm{\vec b}$ is a block of the dimension of $\bm{x}$. 
The first row sets the initial value, $\bm{x}_0=\bm{x}_{\rm in}$. The next rows give $\bm{x}_{j+1}-(\bm{x}_j+A \bm{x}_j \dt)=\bm{b}\dt$, corresponding to the discreti{\s}ation of the differential equation via the Euler method.
The final rows indicate equations where $\bm{x}_{j+1}-\bm{x}_j=0$. This is for the times where $\bm{x}$ is required to be constant.

The Euler method yields an error that scales as $O(\dt^2)$ for a single time step.
Therefore, we expect that the error in the total simulation is $O(N_t \dt^2)=O(\Delta t^2/N_t)$.
To achieve error bounded by $\epsilon$, we can take $N_t = O(\Delta t^2/\epsilon)$. 
Here $\epsilon$ is the error in the solution of the differential equation, whereas the error in the solution of linear systems is denoted $\epsilon_L$.
To determine the scaling for solving this system via the HHL algorithm then requires the condition number to be determined.

As a rough estimate of the condition number, $\|\matr^{-1}\|$ can be estimated by considering how large $\bm{\vec x}$ can be for a given $\bm{\vec b}$.
Intuitively, we can consider each component of $\bm{\vec b}$ as an excitation, which can result in an increase in $\bm{\vec x}$ for later times.
Because there are $O(N_t)$ times that can be considered, this means that we expect $\bm{\vec x}$ can be increased by a factor of $O(N_t)$ over $\bm{\vec b}$.
It is easy to see that $\|\matr\|$ is close to $1$, and so the condition number is $O(N_t)$.
This then results in a scaling of the complexity of at least $O(\Delta t^4)$ (this argument will be made rigorous in the following sections).

This scaling in $\Delta t$ is unexpectedly poor, in comparison to Hamiltonian simulation where the scaling is linear (or close to linear) in $\Delta t$.
As in the case of Hamiltonian simulation, it can be expected that higher-order methods can improve the scaling.
In the case of differential equations, the appropriate higher-order methods are linear multistep methods.
In Sec.~\ref{sec:multistep} we introduce multistep methods, then in Sec.~\ref{sec:cond} we explain how to bound the condition number.
Then in Sec.~\ref{sec:fullalg} we explain how the HHL algorithm is applied and the resulting complexity.

\section{Multistep methods}
\label{sec:multistep}
General linear multistep methods have the form \cite{Butcher,Hairer}
\begin{equation}
\label{eq:multi}
\sum_{\ell=0}^{k} \alpha_\ell \bm{x}_{j+\ell} = \dt \sum_{\ell=0}^{k}\beta_\ell [A(t_{j+\ell})  \bm{x}_{j+\ell}+\bm{b}(t_{j+\ell})].
\end{equation}
Multistep methods can be chosen such that the error is of higher order in $\dt$, but there is the problem that the method may not be stable.
To bound the error it is necessary that the eigenvalues of $A(t_j)$ have no positive real part, because otherwise the error can grow exponentially.
In cases where $A(t_j)$ does have an eigenvalue with positive real part, one can simply subtract a multiple of the identity, and rescale the solution.
Even then, when the exact solution of the differential equation is bounded, the solution of the difference equation may be unbounded.

To examine the stability, one defines the generating polynomials
\begin{equation}
\rho(\zeta)=\sum_{j=0}^k \alpha_j \zeta^j, \qquad \sigma(\zeta) = \sum_{j=0}^k \beta_j \zeta^j.
\end{equation}
The stability can be examined via the roots of the equation
\begin{equation}
\label{eq:stpol}
\rho(\zeta)-\mu \sigma(\zeta) = 0.
\end{equation}
One defines the set $S$ by
\begin{equation}
S := \left\{ \mu\in {\mathbb{C}}; \begin{array}{*{20}l}
{{\rm all~roots~} \zeta_j(\mu) {\rm~of~} \eqref{eq:stpol} {\rm~satisfy~} |\zeta_j(\mu)|} \le 1 \\
{{\rm multiple~roots~satisfy~} |\zeta_j(\mu)|< 1} \\ \end{array} \right\}.
\end{equation}
$S$ is called the stability domain or stability region of the multistep method. In addition, if the roots of $\sigma(\zeta)$ all satisfy $|\zeta|\le 1$, and repeated roots satisfy $|\zeta|<1$, then the method is said to be stable at infinity.

A linear multistep method is said to be order $p$ if it introduces local errors $O(\dt^{p+1})$. This means that, if it is applied with exact starting values to the problem $\dot x = t^q$ ($0\le q \le p$), it integrates the problem without error. A linear multistep method has order $p$ if and only if \cite{Butcher}
\begin{equation}
\rho(e^h)-h\sigma(e^h) = O(h^{p+1}).
\end{equation}

A useful property of linear multistep methods is for them to be $A$-stable \cite{Hairer,Dahlquist}.
\begin{definition}
A linear multistep method is called $A$-stable if $S \supset \mathbb{C}^-$, i.e., if
\begin{equation}
{\rm Re}\, \lambda \le 0 \implies \text{numerical solution for } \dot x = \lambda x \text{ is bounded.}
\end{equation}
\end{definition}
This definition means that, if the solution of the differential equation is bounded, then the approximation given by the multistep method is bounded as well. For a scalar differential equation, the multistep method is bounded whenever $\lambda$ is in the left half of the complex plane. The Euler method is $A$-stable, but it is not possible to construct arbitrary order $A$-stable multistep methods. The second Dahlquist barrier is that an $A$-stable multistep method must be of order $p\le 2$ \cite{Hairer,Dahlquist}. As we wish to consider higher-order multistep methods, we relax the condition and require that the linear multistep method is $A(\alpha)$-stable \cite{Hairer,Widlund}.

\begin{definition}
A linear multistep method is $A(\alpha)$-stable, $0<\alpha<\pi/2$, if
\begin{equation}
S \supset S_\alpha = \{\mu ; |\arg(-\mu)| < \alpha, \mu \ne 0 \}.
\end{equation}
\end{definition}
This definition means that, in the case of a scalar differential equation, the multistep method is bounded whenever $\lambda$ is within a wedge in the left half of the complex plane. For a vector differential equation, the eigenvalues of $A$ should be within this wedge. It is known that, for any $\alpha<\pi/2$ and $k\in \mathbb{N}$, there is an $A(\alpha)$-stable linear $k$-step method of order $p=k$ \cite{Grigoreff,Butcher}.

The error in the total solution of the differential equation will be $O(N_t (\Delta t)^{p+1})$. In order to obtain a rigorous result, we speciali{\s}e to the case that $A$ and $b$ are independent of time. The relevant bound is given in Theorem 7.6 in Chapter V of Ref.\ \cite{Hairer}.
\begin{theorem}
\label{thm2}
Suppose a linear multistep method is of order $p$, $A(\alpha)$-stable and stable at infinity. If the matrix $A$ is diagonali{\s}able (i.e.\ there exists a matrix $V$ such that $V^{-1}AV=D=\diag(\lambda_1,\ldots,\lambda_n)$) with eigenvalues satisfying
\begin{equation}
|\arg(-\lambda_i)|\le \alpha \qquad for~i=1,\ldots,N_x,
\end{equation} 
then there exists a constant $M$ (depending only on the method) such that for all $\dt>0$ the global error satisfies
\begin{equation}
\|\bm{x}(t_m)-\bm{x}_m\| \le M \kappa_V \left( \max_{0\le j< k} \| \bm{x}(t_j)-\bm{x}_j \| + \dt^p \int_{t_0}^{t_m} \|\bm{x}^{(p+1)}(\xi)\|d\xi\right),
\end{equation}
where $\kappa_V=\|V\| \cdot \|V^{-1}\|$ is the condition number of $V$.
\end{theorem}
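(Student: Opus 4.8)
The plan is to reduce the vector error recursion to a family of scalar recursions indexed by the eigenvalues of $A$, and then to establish a \emph{uniform} stability bound for those scalar recursions that is valid for every step size $\dt>0$. Write $e_j := x(t_j)-x_j$. Substituting the exact solution into \eqref{eq:multi} produces a defect $d_j := \sum_{\ell}\alpha_\ell x(t_{j+\ell}) - \dt\sum_\ell\beta_\ell\dot x(t_{j+\ell})$, and since the $A$- and $b$-dependent terms in \eqref{eq:multi} are linear, subtracting the numerical relation cancels $b$ and leaves the linear error recursion
\begin{equation}
\sum_{\ell=0}^{k}\bigl(\alpha_\ell I-\dt\,\beta_\ell A\bigr)\,e_{j+\ell}=d_j .
\end{equation}
A Peano-kernel (integral-remainder) form of the order-$p$ condition gives the sharp bound $\|d_j\|\le C\,\dt^{p}\int_{t_j}^{t_{j+k}}\|x^{(p+1)}(\xi)\|\,d\xi$ with $C$ depending only on the method; using this integral form, rather than the cruder pointwise $O(\dt^{p+1})$, is what ultimately yields the $\dt^p\int\|x^{(p+1)}\|$ term in the statement.

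Diagonalising with $\tilde{e}_j:=V^{-1}e_j$ and $\tilde{d}_j:=V^{-1}d_j$ decouples the recursion into the $N_x$ scalar recursions $\sum_{\ell}(\alpha_\ell-\dt\lambda_i\beta_\ell)\,\tilde{e}^{[i]}_{j+\ell}=\tilde{d}^{[i]}_{j}$, whose characteristic polynomial is $\rho(\zeta)-(\dt\lambda_i)\sigma(\zeta)$, i.e.\ \eqref{eq:stpol} with $\mu=\dt\lambda_i$. Because $|\arg(-\lambda_i)|\le\alpha$ and multiplication by $\dt>0$ does not change the argument, $\mu=\dt\lambda_i\in S_\alpha$ for \emph{every} $\dt>0$, hence $\mu\in S$ by $A(\alpha)$-stability, so all roots of the characteristic polynomial lie in the closed unit disc and any multiple root lies strictly inside. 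Writing the scalar recursion as a one-step recursion in $\mathbb{C}^k$ via its companion matrix $C(\mu)$ (equivalently, via the discrete Green's function), the discrete variation-of-constants formula yields $|\tilde{e}^{[i]}_{j}|\le K(\mu)\bigl(\max_{0\le\ell<k}|\tilde{e}^{[i]}_{\ell}|+\sum_{m}|\tilde{d}^{[i]}_{m}|\bigr)$, where $K(\mu)$ is controlled by $\sup_{n\ge0}\|C(\mu)^n\|$ together with the (uniformly bounded) reciprocal of the leading coefficient $\alpha_k-\mu\beta_k$.

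The crux, and the main obstacle, is to bound $K(\mu)$ \emph{uniformly} over $\mu\in S_\alpha$: the sector $S_\alpha$ is unbounded and touches the boundary of $S$, so power-boundedness of $C(\mu)$ for each individual $\mu$ does not suffice. I would obtain the uniform bound by splitting $S_\alpha$ into three regimes. (i) A punctured neighbourhood of $0$: the root condition on $\rho$ pins down the roots at $\mu=0$, and $A(\alpha)$-stability forces the principal root strictly into the open disc as $\mu$ enters the sector, so a perturbation estimate gives a uniform bound there. (ii) A compact sub-sector $r\le|\mu|\le R$: $\mu\in S$ together with continuity of the roots and compactness gives a uniform bound directly. (iii) $|\mu|>R$: after rescaling, the recursion is a small perturbation of the recursion governed by $\sigma(\zeta)$, whose root condition is exactly ``stable at infinity'' (and the source gets rescaled by $\sim\mu$ as well, which only helps), giving a uniform bound for $|\mu|$ large. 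Matching the three regimes then gives a single power bound $K$ depending only on the method.

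Assembling: with $\|\tilde{e}_j\|\le\|V^{-1}\|\,\|e_j\|$ and $\|\tilde{d}_m\|\le\|V^{-1}\|\,\|d_m\|$, transforming back using $\|e_j\|\le\|V\|\,\|\tilde{e}_j\|$, and noting that each $\xi$ lies in at most $k+1$ of the overlapping intervals $[t_m,t_{m+k}]$, so that the defects summed over the steps needed to reach $t_m$ contribute at most $(k+1)\,\dt^{p}\int_{t_0}^{t_m}\|x^{(p+1)}(\xi)\|\,d\xi$, one obtains
\begin{equation}
\|x(t_m)-x_m\|\le M\,\kappa_V\Bigl(\max_{0\le j<k}\|x(t_j)-x_j\|+\dt^{p}\int_{t_0}^{t_m}\|x^{(p+1)}(\xi)\|\,d\xi\Bigr),
\end{equation}
with $M$ built from $K$, $C$ and $k$, hence depending only on the method. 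Everything after the uniform power bound is routine bookkeeping; the work is concentrated in regimes (i) and (iii) and in making the three estimates match at the regime boundaries.
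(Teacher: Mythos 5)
The paper does not prove this statement: it is quoted directly from Theorem 7.6 in Chapter V of Ref.~\cite{Hairer}, so there is no in-paper argument to compare against. Your sketch reconstructs essentially the cited source's proof --- subtract the exact solution to get the linear error recursion with a Peano-kernel defect bound, diagonalise to decoupled scalar recursions with characteristic polynomial $\rho(\zeta)-\mu\sigma(\zeta)$, $\mu=\dt\lambda_i$, pass to the companion (one-step) form, and establish a power bound on the companion matrices that is uniform over the unbounded sector, with stability at infinity covering $|\mu|\to\infty$; this is also exactly the machinery the paper itself borrows (the Eq.~(7.24)-type bound of Ref.~\cite{Hairer}) in its proof of Lemma~\ref{norm2}. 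The one place your write-up is looser than the source is the sector boundary: the hypothesis allows $|\arg(-\lambda_i)|\le\alpha$ while $S_\alpha$ is open, so the uniform power bound must be proved on the \emph{closed} sector compactified with the point at infinity (as the covering/compactness argument in Ref.~\cite{Hairer} does), and your three-regime matching would need to be carried out explicitly on the rays $|\arg(-\mu)|=\alpha$ as well as in their interior.
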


Here the superscript with round brackets denotes repeated derivative. We can use this result to show a lemma on the scaling of the error.

\begin{lemma}
\label{lem:ersca}
Suppose a linear multistep method is of order $p$, $A(\alpha)$-stable and stable at infinity. If the matrix $A$ is diagonali{\s}able (i.e.\ there exists a matrix $V$ such that $V^{-1}AV=D=\diag(\lambda_1,\ldots,\lambda_n)$) with eigenvalues satisfying
\begin{equation}
|\arg(-\lambda_i)|\le \alpha \qquad for~i=1,\ldots,N_x,
\end{equation} 
and $b$ is constant, then the global error satisfies
\begin{equation}
\|\bm{x}(t_m)-\bm{x}_m\| = O\left( \kappa_V^2 (\|\bm{x}_{\rm init}\| + \|\bm{b}\|/\|A\|)\left[ \kappa_V (\dt \|A\|)^2
 + m(\dt \|A\|)^{p+1} \right] \right),
\end{equation}
where $\kappa_V=\|V\| \cdot \|V^{-1}\|$ is the condition number of $V$.
\end{lemma}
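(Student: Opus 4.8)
The plan is to feed the explicit structure of the solution of $\dot x = Ax+b$ into the global-error estimate of Theorem~\ref{thm2}, after first removing the inhomogeneity and separately accounting for the starting values. Since $|\arg(-\lambda_i)|\le\alpha<\pi/2$ forces $\re\lambda_i<0$, the matrix $A$ is invertible. A consistent linear multistep method has $\rho(1)=0$, and $b$ is constant, so substituting $y:=x+A^{-1}b$ into \eqref{eq:multi} turns it into exactly the same method applied to $\dot y=Ay$: the inhomogeneous term $\dt\,b\,\sum_\ell\beta_\ell$ is cancelled by $-A\bigl(\sum_\ell\alpha_\ell\bigr)A^{-1}b=0$ together with $A^{-1}$ absorbed into $y$. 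Hence $x(t_m)-x_m=y(t_m)-y_m$ with $y(t)=e^{A(t-t_0)}y_{\rm init}$ and $y_{\rm init}=x_{\rm init}+A^{-1}b$, and $x^{(n)}(t)=A^{n}e^{A(t-t_0)}y_{\rm init}$ for $n\ge 1$. Diagonalising $A=VDV^{-1}$ and using $\re\lambda_i<0$, for $t\ge t_0$,
\begin{equation}
\|x^{(n)}(t)\| = \bigl\|V D^{n} e^{D(t-t_0)} V^{-1} y_{\rm init}\bigr\| \le \kappa_V\Bigl(\max_i|\lambda_i|\Bigr)^{n}\|y_{\rm init}\| \le \kappa_V\|A\|^{n}\|y_{\rm init}\|,
\end{equation}
since the spectral radius never exceeds $\|A\|$. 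I would then bound $\|y_{\rm init}\|\le\|x_{\rm init}\|+\|A^{-1}\|\,\|b\|$ and convert $\|A^{-1}\|$ into $1/\|A\|$ at the cost of powers of $\kappa_V$ (via $\|A^{-1}\|\le\kappa_V/\min_i|\lambda_i|$), so that $\|y_{\rm init}\|=O\!\left(\kappa_V^{O(1)}(\|x_{\rm init}\|+\|b\|/\|A\|)\right)$.

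For the starting values $x_1,\dots,x_{k-1}$ I would use the implicit Euler method, started from $x_0=x_{\rm init}$; implicit Euler has order $1$, is $A$-stable (hence $A(\alpha)$-stable for every $\alpha<\pi/2$), and is stable at infinity, and $I-\dt A$ is invertible for all $\dt>0$ because $\re\lambda_i<0$. Theorem~\ref{thm2}, applied to this one-step computation whose own starting error is $\|x(t_0)-x_0\|=0$, over the fixed number $k-1=O(1)$ of steps, gives
\begin{equation}
\max_{0\le j<k}\|x(t_j)-x_j\| \le M'\kappa_V\,\dt\!\int_{t_0}^{t_{k-1}}\!\!\|x''(\xi)\|\,d\xi \le M'\kappa_V\,\dt\,(k\dt)\,\kappa_V\|A\|^{2}\|y_{\rm init}\| = O\!\left(\kappa_V^{2}(\dt\|A\|)^{2}\|y_{\rm init}\|\right).
\end{equation}

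Now apply Theorem~\ref{thm2} to the order-$p$ method over $m$ steps. The first term contributes $M\kappa_V$ times the starting-error bound, i.e.\ $O(\kappa_V^{3}(\dt\|A\|)^{2}\|y_{\rm init}\|)$; for the second term use $\int_{t_0}^{t_m}\|x^{(p+1)}(\xi)\|\,d\xi\le(t_m-t_0)\max\|x^{(p+1)}\|=m\dt\,\kappa_V\|A\|^{p+1}\|y_{\rm init}\|$, so that $M\kappa_V\,\dt^{p}$ times this is $O(\kappa_V^{2}\,m(\dt\|A\|)^{p+1}\|y_{\rm init}\|)$. Adding the two contributions and substituting the bound on $\|y_{\rm init}\|$ yields $\|x(t_m)-x_m\| = O\!\left(\kappa_V^{2}(\|x_{\rm init}\|+\|b\|/\|A\|)\bigl[\kappa_V(\dt\|A\|)^{2}+m(\dt\|A\|)^{p+1}\bigr]\right)$, which is the claimed estimate.

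The bulk of the work is the chain of matrix-norm inequalities, which is routine. The step that needs genuine care is the bookkeeping that converts $\|A^{-1}\|$ and the spectral radius into $\|A\|$ — i.e.\ pinning down exactly which powers of $\kappa_V$ survive in the bound on $\|y_{\rm init}\|$ — together with checking that the starting-value method legitimately satisfies all three hypotheses of Theorem~\ref{thm2}, so that the theorem can be invoked twice (once for the starter, once for the order-$p$ scheme).
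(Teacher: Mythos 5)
Your overall architecture matches the paper's: invoke Theorem~\ref{thm2} twice (once for a first-order starting method over the $k-1$ starting steps, once for the order-$p$ method over $m$ steps), and bound the derivatives $\|x^{(\ell)}(t)\|$ by diagonalising $A$ and using that the eigenvalues have nonpositive real part so that $\|e^{D(t-t_0)}\|\le 1$. The homogenising shift $y=x+A^{-1}b$ and the choice of implicit rather than explicit Euler as starter are harmless cosmetic differences. However, there is a genuine gap in the step you yourself flag as needing care: the claim $\|y_{\rm init}\|\le\|x_{\rm init}\|+\|A^{-1}\|\,\|b\| = O\bigl(\kappa_V^{O(1)}(\|x_{\rm init}\|+\|b\|/\|A\|)\bigr)$ is false. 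The inequality $\|A^{-1}\|\le\kappa_V/\min_i|\lambda_i|$ gives you $\|b\|/\min_i|\lambda_i|$, and the ratio $\|A\|/\min_i|\lambda_i|$ is (up to $\kappa_V$ factors) the condition number of $A$ itself, which is not controlled by any power of $\kappa_V$: take $A=\diag(-1,-10^{6})$, so $\kappa_V=1$, $\|A^{-1}\|\,\|b\|=\|b\|$ but $\|b\|/\|A\|=10^{-6}\|b\|$. This is precisely the stiff regime that $A(\alpha)$-stability is designed for, so the gap cannot be waved away; as written, your bound on the derivatives (and hence the final estimate) acquires an uncontrolled factor $\kappa_A=\|A\|\,\|A^{-1}\|$.

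The fix is not to absorb $A^{-1}b$ into a single norm. Since $x^{(\ell)}(t)=e^{A(t-t_0)}\bigl(A^{\ell}x_{\rm init}+A^{\ell-1}b\bigr)$ for $\ell\ge 1$ (which, incidentally, needs no invertibility of $A$), keep the two terms separate and bound directly, as the paper does: $\|x^{(\ell)}(t)\|\le\kappa_V\bigl(\|A\|^{\ell}\|x_{\rm init}\|+\|A\|^{\ell-1}\|b\|\bigr)=\kappa_V\|A\|^{\ell}\bigl(\|x_{\rm init}\|+\|b\|/\|A\|\bigr)$. Feeding this into your two applications of Theorem~\ref{thm2} then gives exactly the stated powers $\kappa_V^{3}(\dt\|A\|)^{2}$ and $\kappa_V^{2}m(\dt\|A\|)^{p+1}$ with the factor $(\|x_{\rm init}\|+\|b\|/\|A\|)$, with no hidden dependence on the spectrum of $A$ beyond $\|A\|$; the rest of your argument then goes through as in the paper.
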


\begin{proof}
The linear multistep method requires a starting method to obtain the values of $\bm{x}_j$ for $0<j<k$.
The term $\max_{0\le j< k} \| \bm{x}(t_j)-\bm{x}_j \|$ arises from the inaccuracy in this procedure.
One can simply use the Euler method, in which case the error is $O(\dt^2)$.
It is also possible to use higher-order starting methods, but there is not a convenient rigorous result that can be used.
To determine the error in the Euler method, one can simply use Theorem \ref{thm2}.
Because $k=1$ and $p=1$ for the Euler method, and for the initial point there is zero error ($\bm{x}(t_0)=\bm{x}_0$), Theorem \ref{thm2} gives
\begin{equation}
\|\bm{x}(t_j)-\bm{x}_j\| \le M_E \kappa_V \dt\int_{t_0}^{t_j}\|\bm{x}^{(2)}(\xi)\|d\xi,
\end{equation}
where $M_E$ is the constant for the Euler method.
Using this expression for $0\le j < k$ yields
\begin{equation}
\max_{0\le j< k}\|\bm{x}(t_j)-\bm{x}_j\| \le M_E \kappa_V \dt^2 (k-1) \max_{\xi\in[t_0,t_0+(k-1)\dt]}\|\bm{x}^{(2)}(\xi)\|.
\end{equation}

In using these results, it is necessary to place upper bounds on the values of $\|\bm{x}^{(p+1)}(\xi)\|$ and $\|\bm{x}^{(2)}(\xi)\|$. In general these will depend on the value of $\bm{b}(t)$, and its time-dependence. It is well-behaved if $\bm{b}$ is a constant, in which case the exact solution is
\begin{equation}
\bm{x}(t) = e^{A(t-t_0)}(\bm{x}_{\rm init} + A^{-1} \bm{b}) - A^{-1} \bm{b}.
\end{equation}
Then
\begin{equation}
\bm{x}^{(\ell)}(t) = e^{A(t-t_0)}(A^\ell \bm{x}_{\rm init} + A^{\ell-1} \bm{b}),
\end{equation}
so
\begin{align}
\|\bm{x}^{(\ell)}(t)\| &= \|V e^{D(t-t_0)} V^{-1}(A^\ell \bm{x}_{\rm init} + A^{\ell-1} \bm{b})\| \nn
&\le \kappa_V (\|A\|^\ell \|\bm{x}_{\rm init}\| + \|A\|^{\ell-1}\|\bm{b}\|)
\end{align}
In the first line we have used the diagonali{\s}ation of $A$, and in the second line we have used the condition that $|\arg(-\lambda_i)|\le \alpha$.
Using Theorem \ref{thm2}, the error is bounded as
\begin{align}
\|\bm{x}(t_m)-\bm{x}_m\| &\le M \kappa_V \left( \max_{0\le j< k} \| \bm{x}(t_j)-\bm{x}_j \| + \dt^p \int_{t_0}^{t_m} \|\bm{x}^{(p+1)}(\xi)\|d\xi\right) \nn
&\le M \kappa_V \left[ M_E \kappa_V^2 \dt^2 k (\|\bm{x}_{\rm init}\| + \|\bm{b}\|/\|A\|)\|A\|^2
 + \dt^p (t_m-t_0) \kappa_V (\|\bm{x}_{\rm init}\| + \|\bm{b}\|/\|A\|)\|A\|^{p+1}\right] \nn
&= O\left( \kappa_V^2 (\|\bm{x}_{\rm init}\| + \|\bm{b}\|/\|A\|)\left[ \kappa_V (\dt \|A\|)^2
 + m(\dt \|A\|)^{p+1}  \right] \right) .
\end{align}
\end{proof}

This result means that, disregarding the dependence on many of the quantities, and omitting the error due to the starting method, the error scales as $O((\|A\|\dt)^p \|A\| \Delta t)$ for total time $\Delta t$. To achieve error bounded by $\epsilon$, we then use
\begin{equation}
\label{eq:nosteps}
N_t = O\left( \frac{(\|A\|\Delta t)^{1+1/p}}{\epsilon^{1/p}} \right).
\end{equation}
That is, the number of time steps required is close to linear in the time.

Now we consider how to encode the multistep method in a linear system $\matr\bm{\vec x} = \bm{\vec b}$.
Earlier we described how this can be done for the Euler method.
More generally, we use the Euler method as a starting method, then continue with a higher-order method, then for the final rows again have $\bm{x}_{j+1}-\bm{x}_j=0$ to ensure that all the final values are equal.
Therefore, we set the blocks of $\matr$ as, for $N_t\ge 2k$,
\begin{equation}
\label{eq:explicit}
\begin{array}{*{20}l}
\matr_{j,j} = \openone, & 0\le j < k, \quad N_t/2 < j \le N_t, \\
\matr_{j,j-1} = -(\openone + A\dt), & 1\le j < k, \\
\matr_{j,j-k+\ell} = \alpha_\ell \openone - \beta_\ell A \dt, & k\le j \le N_t/2, \quad 0\le \ell\le k. \\
\matr_{j,j-1} = -\openone, & N_t/2 < j \le N_t. \\ \end{array}
\end{equation}
We will always require $N_t\ge 2k$ when using $\matr$, because otherwise there are not enough time steps to start the linear multistep method.
We also set the blocks of $\bm{\vec b}$ as
\begin{equation}
\label{eq:explicitb}
\begin{array}{*{20}l}
\bm{b}_{0} = \bm{x}_{\rm in}, & \\
\bm{b}_{j} = \bm{b}\dt, & 1 \le j < k, \\
\bm{b}_{j} = \sum_{\ell=0}^k \beta_\ell \bm{b}\dt, & k\le j \le N_t/2, \\
\bm{b}_{j} = 0, & N_t/2< j \le N_t. \\
\end{array}
\end{equation}

We require $A$, $\bm{b}$, and $\bm{x}_{\rm in}$ to be sparse, with no more than $s$ nonzero elements in any row or column.
We assume that the oracles are of the same form as in Ref.\ \cite{Berry09}.
That is, the oracle for $A$ is a unitary operator acting as
\begin{equation}
O_A \ket{j,\ell}\ket{z} = \ket{j,\ell} \ket{z\oplus A^{[j,\ell]}}.
\end{equation}
Here the $\oplus$ represents modular addition, and $A^{[j,\ell]}$ is given in some binary representation.
Note that the superscript denotes indexing within the block $A$.
We also require an oracle for the sparseness, that locates the nonzero elements. Given a function $f(j,\ell)$ that gives the row index of the $\ell$th nonzero element in column $j$, we require a unitary oracle
\begin{equation}\label{eq:orc}
O_F \ket{j,\ell} = \ket{j,f(j,\ell)}.
\end{equation}
Because $A$ is not Hermitian, we require a similar oracle to give the positions of the nonzero elements in a given row.
We also require oracles to give the values and locations of nonzero elements for $\bm{b}$ and $\bm{x}_{\rm in}$.
These oracles ensure that the initial state corresponding to $\bm{\vec b}$ can be prepared efficiently.
Alternatively, it is also possible to consider $\bm{b}$ and $\bm{x}_{\rm in}$ such the efficient preparation procedure of Ref.\ \cite{Grover02} can be used.

A linear system of equations can be solved using the algorithm of Ref.\ \cite{Harrow09} with complexity $\tilde O(\log(N)s^4\kappa^2/\epsilon_L)$, where $\kappa$ is the condition number of the matrix $\matr$, and $\epsilon_L$ is the allowable error.
(Note that the power of $s$ should be 4, not 2 as given in Ref.\ \cite{Harrow09}.)
Recall that $\epsilon_L$ indicates the allowable error for the solution of the linear systems, which is distinct from the allowable error for the solution of the differential equation, $\epsilon$.

\section{Bounding the condition number}
\label{sec:cond}
To determine the complexity it is necessary to determine the value of the condition number $\kappa$.
To bound this condition number we first determine bounds on the norms of $\|\matr\|$ and $\|\matr^{-1}\|$.

\begin{lemma}
\label{norm1}
The matrix $\matr$, with blocks given by Eq.\ \eqref{eq:explicit}, satisfies $\|\matr\|=O(1)$ provided $\dt=O(1/\|A\|)$.
\end{lemma}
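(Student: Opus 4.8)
The plan is to exploit the banded block structure of $\matr$. From Eq.~\eqref{eq:explicit}, every nonzero block of $\matr$ sits on one of the block-diagonals at offsets $0,-1,\ldots,-k$ from the main diagonal: the blocks $\matr_{j,j}=\openone$ and $\matr_{j,j-1}=-(\openone+A\dt)$ or $-\openone$ occupy offsets $0$ and $-1$, while the multistep blocks $\matr_{j,j-k+\ell}$ occupy offset $\ell-k$ for $0\le\ell\le k$. Hence I would write $\matr=\sum_{d=0}^{k}\matr^{(d)}$, where $\matr^{(d)}$ retains only the blocks of $\matr$ lying on the $(-d)$-th block-diagonal, so that $(\matr^{(d)})_{j,j-d}$ equals the corresponding block of $\matr$ and all other blocks vanish.

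First I would bound $\|\matr^{(d)}\|$ for each $d$. Since $\matr^{(d)}(\matr^{(d)})^\dagger$ is block-diagonal, with $j$-th diagonal block $B_j B_j^\dagger$ where $B_j=(\matr^{(d)})_{j,j-d}$, one has $\|\matr^{(d)}\|=\max_j\|B_j\|$. In every case the block $B_j$ has the form $c_1\openone - c_2 A\dt$ with $|c_1|,|c_2|$ equal to $1$ or to $|\alpha_\ell|,|\beta_\ell|$; the latter are fixed coefficients of the chosen multistep method and so are $O(1)$. Therefore $\|B_j\|\le |c_1| + |c_2|\,\|A\|\dt = O(1)$, using the hypothesis $\dt=O(1/\|A\|)$ so that $\|A\|\dt=O(1)$. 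This gives $\|\matr^{(d)}\|=O(1)$ for every $d$.

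I would then conclude by the triangle inequality: $\|\matr\|\le\sum_{d=0}^{k}\|\matr^{(d)}\|\le (k+1)\cdot O(1)=O(1)$, since $k$, the number of steps of the multistep method, is a fixed constant independent of $N_t$, $\dt$, and the matrix dimensions. Equivalently, one could invoke the Schur test: each block row and each block column of $\matr$ contains at most $k+1$ nonzero blocks, each of operator norm $O(1)$, so $\|\matr\|\le\sqrt{\bigl(\max_i\sum_j\|\matr_{ij}\|\bigr)\bigl(\max_j\sum_i\|\matr_{ij}\|\bigr)}=O(1)$.

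There is no genuine analytic obstacle here; the only points requiring care are bookkeeping. One must absorb the method coefficients $\alpha_\ell,\beta_\ell$ — and $k$ itself — into the $O(1)$ as "constants depending only on the method," consistent with the convention already used for the constant $M$ in Theorem~\ref{thm2}. One must also correctly catalogue which block-diagonals can be nonzero, noting in particular that offset $-1$ receives contributions both from the Euler starting rows and tail rows and from the $\ell=k-1$ term of the multistep stencil, so that the occupied offsets are exactly $\{0,-1,\ldots,-k\}$, of which there are $k+1$.
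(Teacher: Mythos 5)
Your proposal is correct and follows essentially the same route as the paper: decompose $\matr$ into the $k+1$ block-diagonal matrices at offsets $0,-1,\ldots,-k$, bound each by the maximum block norm (each block being of the form $c_1\openone - c_2 A\dt$ with method-dependent constants, hence $O(1)$ when $\dt=O(1/\|A\|)$), and finish with the triangle inequality. The bookkeeping points you flag (offset $-1$ receiving both Euler/tail and $\ell=k-1$ contributions, and absorbing $k$, $\alpha_\ell$, $\beta_\ell$ into method-dependent constants) are handled the same way in the paper's proof.
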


\begin{proof}
To determine the upper bound on $\|\matr\|$, we express $\matr$ as a sum of block-diagonal matrices, and use the triangle inequality. Let us define $\matr^{\{\ell\}}$ to be the block diagonal matrix with all entries zero, except
\begin{equation}
\matr^{\{\ell\}}_{j,j-\ell} = \matr_{j,j-\ell}.
\end{equation}
We then have
\begin{equation}
\matr = \sum_{\ell=0}^k \matr^{\{\ell\}} ,
\end{equation}
so, via the triangle inequality,
\begin{equation}
\|\matr\| \le \sum_{\ell=0}^k \|\matr^{\{\ell\}}\|.
\end{equation}
The norm of a block-diagonal matrix is just the maximum norm of the blocks, so we find
\begin{align}
\|\matr^{\{0\}}\| &\le \max(1,|\alpha_k|+|\beta_k| h \|A\|), \nn
\|\matr^{\{1\}}\| &\le \max(1 + h\|A\|,|\alpha_{k-1}|+|\beta_{k-1}| h \|A\|), \nn
\|\matr^{\{\ell\}}\| &\le |\alpha_{\ell}|+|\beta_{\ell}| h \|A\|, \qquad 1<\ell\le k.
\end{align}
Because we require that $\dt=O(1/\|A\|)$, each of these norms is $O(1)$, and hence the overall norm is $O(1)$.
\end{proof}

\begin{lemma}
\label{norm2}
Suppose that the multistep method is $A(\alpha)$-stable, the matrix $A$ may be diagonalised as $A=VDV^{-1}$, and the eigenvalues of $A$ all satisfy $|\arg(-\lambda_i)|\le \alpha$. Then the matrix $\matr$, with blocks given by Eq.\ \eqref{eq:explicit}, satisfies $\|\matr^{-1}\|=O(N_t\kappa_V)$, where $\kappa_V$ is the condition number of $V$.
\end{lemma}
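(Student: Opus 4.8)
The plan is to identify $\|\matr^{-1}\|$ with the worst-case size of the numerical solution produced by the scheme, and then control that size by the usual stability argument for linear multistep methods --- the same one underlying Theorem~\ref{thm2}. Write $\|\matr^{-1}\| = \sup_{\vec b\neq 0}\|\matr^{-1}\vec b\|/\|\vec b\|$; fix a right-hand side $\vec b$ with blocks $b_0,\dots,b_{N_t}$ and put $\vec x = \matr^{-1}\vec b$, so the blocks $x_j$ satisfy exactly the equations encoded in \eqref{eq:explicit}: $x_0 = b_0$, the Euler starting recurrence for $1\le j<k$, the $k$-step recurrence $\sum_{\ell=0}^{k}(\alpha_\ell\openone - \beta_\ell A\dt)\,x_{j-k+\ell} = b_j$ for $k\le j\le N_t/2$, and $x_j - x_{j-1} = b_j$ in the constant tail. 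Since $\matr$ is block lower triangular with diagonal blocks $\openone$ or $\alpha_k\openone - \beta_k A\dt$, it suffices by linearity to treat one forcing block at a time: let $\vec b^{(m)}$ have $m$th block $b_m$ and all others zero, so that $\vec x = \sum_{m=0}^{N_t}\vec x^{(m)}$ with $\vec x^{(m)} = \matr^{-1}\vec b^{(m)}$.

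For each $m$, block lower triangularity together with invertibility of the diagonal blocks forces $x^{(m)}_j = 0$ for $j<m$, $x^{(m)}_m = \matr_{m,m}^{-1}b_m$, and for $j>m$ the block $x^{(m)}_j$ evolves by the \emph{homogeneous} version of whichever recurrence governs step $j$, started from the history $(0,\dots,0,x^{(m)}_m)$. Diagonalising $A = VDV^{-1}$ and setting $y = V^{-1}x^{(m)}$ decouples this into $N_x$ scalar recurrences $\sum_\ell(\alpha_\ell - \beta_\ell\mu_i)\,y^{[i]}_{j-k+\ell} = 0$ with $\mu_i = \lambda_i\dt$. The hypotheses of Theorem~\ref{thm2} --- $A(\alpha)$-stability, stability at infinity, and $|\arg(-\lambda_i)|\le\alpha$ --- place every $\mu_i$ in the region where the root condition on \eqref{eq:stpol} holds, so there is a constant $C_0$ depending only on the method with $|y^{[i]}_j|\le C_0\,|y^{[i]}_m|$ for all $j\ge m$. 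Moreover $\matr_{m,m}^{-1}$ in the eigenbasis is diagonal with entries $1$ or $(\alpha_k - \beta_k\mu_i)^{-1}$, and $|\alpha_k - \beta_k\mu_i|$ is bounded below over the relevant $\mu_i$ (it can vanish only at the value of $\mu$ that sends a root of \eqref{eq:stpol} to infinity, which $A(\alpha)$-stability keeps away from the wedge), so $|y^{[i]}_m|\le C_1\,|(V^{-1}b_m)^{[i]}|$. Combining the scalar bounds coordinatewise --- no dimension factor appears, since each component of $y_j$ is controlled by the same component of $V^{-1}b_m$ --- gives $\|y_j\|\le C_0C_1\|V^{-1}b_m\|$, hence $\|x^{(m)}_j\| = \|Vy_j\|\le C_2\,\kappa_V\,\|b_m\|$ for every $j$, with $C_2$ depending only on the method. (The finitely many Euler starting rows contribute at most an extra method-dependent factor $(1+\dt\|A\|)^{k-1}$, which is absorbed into $C_2$.)

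Summing over impulses, $\|x_j\|\le\sum_{m\le j}\|x^{(m)}_j\|\le C_2\kappa_V\sum_{m=0}^{N_t}\|b_m\|$ for every $j$, so $\|\vec x\|^2 = \sum_{j=0}^{N_t}\|x_j\|^2 \le (N_t+1)\,C_2^2\kappa_V^2\big(\textstyle\sum_m\|b_m\|\big)^2$. By Cauchy--Schwarz $\sum_m\|b_m\|\le\sqrt{N_t+1}\,\|\vec b\|$, so $\|\vec x\|\le (N_t+1)\,C_2\kappa_V\,\|\vec b\|$, i.e.\ $\|\matr^{-1}\| = O(N_t\kappa_V)$. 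Each of the two factors of $\sqrt{N_t+1}$ has a clear origin: one from passing from $\|\vec x\|$ to $\max_j\|x_j\|$, the other from bounding a sum of $N_t+1$ block norms by the Euclidean norm of $\vec b$.

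The step I expect to be the real obstacle is the \emph{uniform} stability estimate $|y^{[i]}_j|\le C_0\,|y^{[i]}_m|$: the root condition by itself only gives $|\zeta_j(\mu_i)|\le 1$, and extracting a bound whose constant is independent of the eigenvalue requires controlling what happens as roots approach the unit circle, as roots coalesce (where the definition of $S$ forces $|\zeta|<1$), and as $\mu_i\to\infty$ along the boundary rays of the wedge (where stability at infinity takes over). This is precisely the estimate established on the way to Theorem~\ref{thm2} in Ref.~\cite{Hairer}, so rather than re-derive it I would quote it; the remainder is bookkeeping with the block structure and submultiplicativity of the norm.
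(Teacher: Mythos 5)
Your proposal is correct and takes essentially the same route as the paper: both diagonalise $A$ to reduce $\matr^{-1}$ to decoupled scalar multistep recurrences (picking up the factor $\kappa_V$), both quote the uniform-in-$\mu$ stability estimate from Hairer that underlies Theorem \ref{thm2}, and both obtain the factor $N_t$ from the same two $\sqrt{N_t}$ counting steps, your impulse/superposition argument being just the discrete variation-of-constants form of the paper's direct application of Hairer's forced-recurrence bound to ${\cal D}\vec z=\vec y$. The only point worth noting is that your Euler-start factor $(1+\dt\|A\|)^{k-1}$ is a method-only constant only when $\dt\|A\|=O(1)$, which this lemma does not formally assume---but the paper's own Euler constant $M_E$ hides the same requirement, which is supplied where the lemma is used (Theorem \ref{thm:conthm}), so this is not a real discrepancy.
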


\begin{proof}
To upper bound $\|\matr^{-1}\|$, we use a method analogous to that used to bound the error in Ref.\ \cite{Hairer}. As in the condition for Theorem \ref{thm2}, we assume that $A$ may be diagonalised as
\begin{equation}
A = V D V^{-1}.
\end{equation}
Note that $A$ need not be Hermitian, so $V$ need not be unitary. If we define ${\cal V}$ to be to the block matrix with $V$ on the diagonal, and ${\cal D}$ to be the matrix corresponding to $\matr$ except with $A$ replaced with $D$, then $\matr={\cal V}{\cal D}{\cal V}^{-1}$. We obtain
\begin{equation}
\|\matr^{-1}\|\le \|{\cal V}\|\cdot\|{\cal D}^{-1}\|\cdot\|{\cal V}^{-1}\| = \kappa_V \|{\cal D}^{-1}\|.
\end{equation}
To bound $\|\matr^{-1}\|$ we therefore just need to bound $\|{\cal D}^{-1}\|$.

The matrix ${\cal D}$ corresponds to the linear multistep solution of decoupled scalar differential equations.
That is, taking $\bm{z}=V^{-1}\bm{x}$, the differential equation becomes $N_x$ decoupled differential equations
\begin{equation}
\dot z^{[j]}(t) = \lambda_j z^{[j]}(t) + [V^{-1}\bm{b}]^{[j]}.
\end{equation}
The matrix ${\cal D}$ gives decoupled linear multistep solutions of each of these differential equations. It may therefore be written in block-diagonal form, with each block corresponding to solution of each of these decoupled equations. The value of $\|{\cal D}^{-1}\|$ can therefore be bounded by the maximum of the norm of the inverse of each of these blocks.

To bound the norm of the inverse, we can take ${\cal D}\bm{\vec z} = \bm{\vec y}$, and determine a bound on the norm of $\bm{\vec z}$ for a given norm of $\bm{\vec y}$. We can determine this by separately examining the uncoupled blocks in ${\cal D}$. For each of these blocks (labelled by $j$) we have the linear multistep equation, for $m=0,\ldots,N_t/2-k$,
\begin{equation}
\label{eq:multier}
\sum_{i=0}^k (\alpha_i - \dt \lambda_j \beta_i) z_{m+i}^{[j]} = y_{m+k}^{[j]}.
\end{equation}
We also have, for the initial condition, $z_{0}^{[j]} = y_{0}^{[j]}$,
and for the Euler method as the starting method with $0\le m<k-1$,
\begin{equation}
z_{m+1}^{[j]} - (1 + \dt \lambda_j) z_{m}^{[j]} = y_{m+1}^{[j]}.
\end{equation}
For the end of the simulation, we have for $N_t/2\le m<N_t$,
\begin{equation}
z_{m+1}^{[j]} - z_{m}^{[j]} = y_{m+1}^{[j]}.
\end{equation}

We can see that Eq.\ \eqref{eq:multier} is equivalent to Eq.\ (7.11) in the method used to bound error in Ref.\ \cite{Hairer}. We identify $z_{m+i}^{[j]}$ as equivalent to $e_{m+i}$ in Ref.\ \cite{Hairer}, and $y_{m+k}^{[j]}$ as equivalent to $\delta_h(x_m)$ in Ref.\ \cite{Hairer}.
As in that method, we can define
\begin{align}
\bm{E}_m &:= (z_{m+k-1}^{[j]},\ldots,z_{m+1}^{[j]},z_{m}^{[j]})^T, \nn
\bm{\Delta}_m &:= (y_{m+k}^{[j]}/(\alpha_k-\dt \lambda_j\beta_k),0,\ldots,0)^T.
\end{align}
As the problem is equivalent to that considered in Ref.\ \cite{Hairer}, the result given in Eq.\ (7.24) of that reference holds:
\begin{equation}
\|\bm{E}_{m+1}\| \le M\left(\|\bm{E}_0\|+\sum_{\ell=0}^m \|\bm{\Delta}_\ell\| \right),
\end{equation}
where $M$ is a constant depending only on the method. Using the definition of $\bm{\Delta}_\ell$ gives
\begin{align}
\|\bm{E}_{m+1}\| &\le M\left(\|\bm{E}_0\|+\sum_{\ell=0}^{m} |y_{\ell+k}^{[j]}|/|\alpha_k-\dt \lambda_j\beta_k| \right) \nn
&\le M\left(\|\bm{E}_0\|+\sum_{\ell=0}^{m} |y_{\ell+k}^{[j]}|/|\alpha_k| \right).
\end{align}
In the last line we have used the fact that the condition of $A(\alpha)$ stability means $\alpha_k\cdot\beta_k>0$, so $|\alpha_k - \dt \lambda_j\beta_k|^{-1}\le |\alpha_k|^{-1}$.

For the starting method, we have used the Euler method, and the result is simpler.
For the Euler method, $\bm{E}_m$ and $\bm{\Delta}_m$ are scalars, and are just $z_m^{[j]}$ and $y_{m+1}^{[j]}$.
The corresponding result is therefore, for $0< m < k$,
\begin{align}
|z_m^{[j]}| &\le M_E\left(|z_0^{[j]}|+\sum_{\ell=0}^{m-1} |y_{\ell+1}^{[j]}| \right) \nn
&= M_E\sum_{\ell=0}^{m} |y_{\ell}^{[j]}|.
\end{align}
Here $M_E$ is the corresponding constant for the Euler method.
For the end of the simulation, we have for $N_t/2\le m<N_t$, $z_{m+1}^{[j]} - z_{m}^{[j]} = y_{m+1}^{[j]}$, so
\begin{equation}
|z_m^{[j]}| \le |z_{N_t/2}^{[j]}| + \sum_{\ell=N_t/2+1}^m | y_\ell^{[j]} |.
\end{equation}

The norm of $\bm{E}_0$ can be bound as
\begin{align}
\|\bm{E}_0\| &\le \sqrt{\sum_{m=0}^{k-1} |z_m^{[j]}|^2} \nn
&\le M_E \sqrt{k} \sum_{\ell=0}^{k-1} |y_{\ell}^{[j]}|.
\end{align}
Using this result, $|z_{m}^{[j]}|$ can be bound as, for $k\le m \le N_t/2$,
\begin{align}
|z_{m}^{[j]}| &\le \| \bm{E}_{m+1-k}\| \nn
& \le M\left(\|\bm{E}_0\|+\sum_{\ell=0}^{m-k} |y_{\ell+k}^{[j]}|/|\alpha_k| \right) \nn
& \le M\left(M_E \sqrt{k} \sum_{\ell=0}^{k-1} |y_{\ell}^{[j]}|+\sum_{\ell=k}^{m} |y_{\ell}^{[j]}|/|\alpha_k| \right) .
\end{align}

For convenience we define the quantity
\begin{equation}
M_T := \max(MM_E\sqrt{k},M/|\alpha_k|,M_E,1).
\end{equation}
Then we find that, for all $0\le m\le N_t$,
\begin{equation}
|z_{m}^{[j]}| \le M_T \sum_{\ell=0}^{m} |y_{\ell}^{[j]}|.
\end{equation}
Hence we can determine an overall upper bound on the norm of $\vec z^{[j]}$ as
\begin{align}
\|\vec{z}^{[j]}\|^2 &\le M_T^2\sum_{m=0}^{N_t} \left(\sum_{\ell=0}^{m} |y_{\ell}^{[j]}|\right)^2 \nn
&\le M_T^2 N_t^2 \|\vec{y}^{[j]}\|^2. 
\end{align}
Summing over $j$, this then gives
\begin{align}
\|\bm{\vec z}\| \le M_T N_t \|\bm{\vec y}\|. 
\end{align}
This result means that $\|{\cal D}^{-1}\|\le M_TN_t$, where $M_T$ depends only on the method. This then bounds the norm of $\matr^{-1}$ as
\begin{equation}
\|\matr^{-1}\| = O(N_t\kappa_V).
\end{equation}
\end{proof}

We can now use these results to bound the condition number of $\matr$.

\begin{theorem}
\label{thm:conthm}
Suppose that the multistep method is $A(\alpha)$-stable, the matrix $A$ may be diagonalised as $A=VDV^{-1}$, the eigenvalues of $A$ all satisfy $|\arg(-\lambda_i)|\le \alpha$, and $\dt=O(1/\|A\|)$. Then the matrix $\matr$, with blocks given by Eq.\ \eqref{eq:explicit}, has condition number $\kappa=O(N_t\kappa_V)$, where $\kappa_V$ is the condition number of $V$.
\end{theorem}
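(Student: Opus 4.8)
The plan is to assemble the theorem directly from the two preceding lemmas, since the condition number of $\matr$ is by definition $\kappa = \|\matr\|\cdot\|\matr^{-1}\|$. First I would check that the hypotheses of Theorem \ref{thm:conthm} are exactly those needed to invoke both lemmas: the assumption $\dt = O(1/\|A\|)$ is precisely the hypothesis of Lemma \ref{norm1}, and the assumptions that the method is $A(\alpha)$-stable, that $A = VDV^{-1}$ is diagonalisable, and that $|\arg(-\lambda_i)|\le\alpha$ are precisely the hypotheses of Lemma \ref{norm2}. (One should note in passing that Lemma \ref{norm2} also tacitly uses stability at infinity and the order-$p$ property through its appeal to the bounds in Ref.\ \cite{Hairer}, but these are standard assumptions on the multistep method and are listed among the running hypotheses.)

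With the hypotheses verified, Lemma \ref{norm1} gives $\|\matr\| = O(1)$ and Lemma \ref{norm2} gives $\|\matr^{-1}\| = O(N_t \kappa_V)$. Taking the product,
\begin{equation}
\kappa = \|\matr\|\cdot\|\matr^{-1}\| = O(1)\cdot O(N_t\kappa_V) = O(N_t\kappa_V),
\end{equation}
which is the claimed bound. The constants hidden in the $O(\cdot)$ depend only on the coefficients $\alpha_\ell,\beta_\ell$ of the multistep method (via the constants $M$, $M_E$, $M_T$ appearing in Lemma \ref{norm2} and the coefficient bounds in Lemma \ref{norm1}), so the final constant likewise depends only on the method.

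There is essentially no obstacle here: the substantive work — the triangle-inequality/block-diagonal estimate for $\|\matr\|$, and the reduction of $\|\matr^{-1}\|$ to the decoupled scalar multistep recurrences together with the error-propagation bound borrowed from Ref.\ \cite{Hairer} — has already been carried out in Lemmas \ref{norm1} and \ref{norm2}. If I wanted to be careful about anything, it would be making sure the $\dt = O(1/\|A\|)$ hypothesis is genuinely consistent with the choice of $N_t$ used elsewhere (so that both lemmas apply simultaneously for the step size actually used), but that is a consistency check rather than a new argument.
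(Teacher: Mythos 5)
Your proposal is correct and follows essentially the same route as the paper: the paper's proof likewise writes $\kappa = \|\matr\|\cdot\|\matr^{-1}\|$, observes that the theorem's hypotheses are exactly those of Lemmas \ref{norm1} and \ref{norm2}, and multiplies the two bounds to get $\kappa=O(N_t\kappa_V)$. Your additional remarks about constants and consistency of $\dt$ with $N_t$ are fine but not needed for the argument.
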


\begin{proof}
The condition number is given by the formula
\begin{equation}
\kappa = \left(\max_{\bm{\vec x}} \frac{\| \matr\bm{\vec x} \|}{\|\bm{\vec x}\|}\right)
\left(\max_{\bm{\vec x}} \frac{\|\bm{\vec x}\|}{\| \matr\bm{\vec x} \|}\right) = \|\matr\| \cdot \|\matr^{-1}\|.
\end{equation}
The conditions of this theorem ensure that the conditions of Lemmas \ref{norm1} and \ref{norm2} hold. Therefore we can use the bounds on $\|\matr\|$ and $\|\matr^{-1}\|$ from those lemmas to obtain $\kappa=O(N_t \kappa_V)$.
\end{proof}

This result for the condition number can be explained in a more intuitive way.
Each value of $y_\ell^{[j]}$ is equivalent to an excitation of the differential equation at a single time.
Therefore $\bm{\vec z}$ is close to the solution of the differential equation with each of those individual excitations.
An excitation can not cause a growing solution, because of the stability condition.
This means that the worst that an excitation can do is cause a solution that is displaced by a proportional amount for the remaining time.
Therefore the norm of $\bm{\vec z}$ can not be more than a factor of $N_t$ times the norm of $\bm{\vec y}$.

\section{Algorithm for solving linear systems}
\label{sec:fullalg}
Next the bound on the condition number will be used to estimate the complexity of the quantum algorithm.
We will first explain the scaling in a simple way, then give a more rigorous result.
Using Eq.\ \eqref{eq:nosteps} for the number of time steps, we have (ignoring dependence on many of the quantities)
\begin{equation}
\kappa = O\left( \frac{(\|A\|\Delta t)^{1+1/p}}{\epsilon^{1/p}} \right).
\end{equation}
Using this expression in the result for the complexity of solving linear systems from Ref.\ \cite{Harrow09} gives
\begin{equation}
\tilde O(\log(N)s^4(\|A\|\Delta t)^{2+2/p}/(\epsilon^{2/p}\epsilon_L)).
\end{equation}

Before obtaining the overall scaling, another factor that needs to be considered is the scaling for creating the state encoding $\bm{\vec b}$.
This is because the algorithm of Ref.\ \cite{Harrow09} uses an amplitude amplification approach, which requires the preparation of this state at each step.
Therefore, the overall complexity is multiplied by the complexity of performing this state preparation.
We assume that we have oracles that give the elements of $\bm{x}_{\rm in}$ and $\bm{b}$ in the computational basis, and that these vectors are $s$-sparse.
We find the following result.

\begin{lemma}
\label{lem:prep}
The state encoding $\bm{\vec b}$,
\begin{equation}
\ket{\bm{\vec b}} \propto \sum_{j,\ell} b_j^{[\ell]} \ket{j,\ell},
\end{equation}
can be prepared using $O(\sqrt{s}+\log(N_t))$ calls to the oracles for $\bm{x}_{\rm in}$ and $\bm{b}$, provided the normali{\s}ations of $\bm{x}_{\rm in}$ and $\bm{b}$ are known.
\end{lemma}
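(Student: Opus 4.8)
The plan is to exploit the block structure of $\vec b$ recorded in Eq.~\eqref{eq:explicitb}. The only nonzero blocks are $b_0=x_{\rm in}$, the blocks $b_j=\dt b$ for $1\le j<k$, and the blocks $b_j=\big(\sum_{\ell=0}^k\beta_\ell\big)\dt b$ for $k\le j\le N_t/2$; all later blocks vanish. Regarding the state as living on a time register (indexed by $j$) tensored with a space register (indexed by the component $\ell$), $\ket{\vec b}$ attaches the normalised vector $x_{\rm in}/\|x_{\rm in}\|$ to $\ket{0}$ and the \emph{single} normalised vector $b/\|b\|$ to every $\ket{j}$ with $1\le j\le N_t/2$, each weighted by a known scalar. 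I would therefore prepare it in two stages: first a ``weight state'' on the time register alone, then a conditional loading of the appropriate space-register vector.

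First stage. Since $\|x_{\rm in}\|$ and $\|b\|$ are assumed known, and $k$, $N_t$ and the method coefficients $\beta_\ell$ are fixed, the required time-register state
\begin{equation}
\ket{w} \propto \|x_{\rm in}\|\,\ket{0} \,+\, \dt\|b\|\sum_{j=1}^{k-1}\ket{j} \,+\, \Big(\sum_{\ell=0}^k\beta_\ell\Big)\dt\|b\|\sum_{j=k}^{N_t/2}\ket{j}
\end{equation}
has only three distinct amplitudes, supported on a single point and on two contiguous intervals (one of them of constant length $k-1$). A uniform superposition over a contiguous block of $M$ indices can be built with $O(\log M)$ one- and two-qubit gates, and the three pieces can be stitched together by first rotating a small ``branch'' register to the correct weights and then conditionally spreading over the matching interval. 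This costs $O(\log N_t)$ elementary gates, uses no oracle calls, and is where the possibly-negative sign of $\sum_\ell\beta_\ell$ is absorbed.

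Second stage. Here I would invoke the standard routine for preparing an $s$-sparse amplitude vector from value and location oracles. Padding the support of each of $x_{\rm in}$ and $b$ to exactly $s$ entries (with zero values), I prepare $\tfrac{1}{\sqrt s}\sum_{\ell=1}^s\ket{\ell}$, apply the location oracle to send $\ell$ to the position of the $\ell$th nonzero entry, query the value oracle, perform a controlled rotation on a flag qubit by an angle proportional to that value divided by the known norm, and uncompute the value query. The branch with flag $\ket{0}$ then carries exactly the target normalised vector with amplitude $1/\sqrt s$, so $O(\sqrt s)$ rounds of amplitude amplification boost it to $\Theta(1)$ using $O(\sqrt s)$ oracle calls. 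Running this subroutine controlled on whether the time index equals $0$ (load $x_{\rm in}$) or lies in $\{1,\dots,N_t/2\}$ (load $b$), and leaving it trivial for larger indices, produces $\ket{\vec b}$ up to normalisation. Together with the first stage the overall count is $O(\sqrt s+\log N_t)$, with $\sqrt s$ from oracle queries and $\log N_t$ from the gate cost of building $\ket{w}$.

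The step I expect to be the main obstacle is making the amplitude amplification of the second stage run coherently \emph{in superposition} over the time register. The essential point that rescues it is the zero-padding: it forces the success amplitude to be exactly $1/\sqrt s$ in both the $x_{\rm in}$ branch and the $b$ branch, so one fixed schedule of $O(\sqrt s)$ rounds amplifies every branch simultaneously. One still has to verify that the two reflections needed --- about the flag-$\ket{0}$ subspace, and about the subroutine's output on the all-zero ancillas (the latter being just the subroutine conjugating a reflection about $\ket{0}$) --- cost only $O(\poly\log)$ gates with no further oracle calls, and that every garbage register introduced by the sparse-preparation routine is uncomputed before the branches are recombined. These checks are routine but must be carried out with care.
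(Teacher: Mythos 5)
Your proposal is correct and follows essentially the same route as the paper: a weight state on the time register costing $O(\log N_t)$ with no oracle calls, plus preparation of $x_{\rm in}$ or $b$ conditioned on the time register at cost $O(\sqrt{s})$ oracle calls, where the paper simply cites Grover's state-synthesis method together with the sparseness-oracle trick of Ref.~\cite{Berry09} instead of unpacking the flag-rotation and amplitude-amplification steps as you do. The one loose end you leave --- boosting the flag amplitude only to $\Theta(1)$ rather than exactly $1$ --- is harmless, since after zero-padding the success amplitude is known to be exactly $1/\sqrt{s}$, so exact amplitude amplification closes it.
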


\begin{proof}
For this preparation we can assume that the normali{\s}ations of $\bm{x}_{\rm in}$ and $\bm{b}$ are known.
That is because this normali{\s}ation can be determined with complexity $O(s)$, and need only be determined once.
Because the overall complexity of the simulation is greater than this, it can just be assumed that the normali{\s}ation is known.

A state of dimension $s$ can be prepared with complexity $O(\sqrt{s})$ using the method of Ref.~\cite{Grover00}.
We take the oracle that gives the positions of the nonzero elements to be of the form given in Eq.~\eqref{eq:orc}.
This form is the same as in Ref.~\cite{Berry09}, and is slightly stronger than the form given in Ref.~\cite{Berry07}.
To obtain an oracle of this form, one simply needs to be able to invert the function for the positions of the nonzero elements.
This inversion can be performed efficiently if the positions are given sequentially.
Using an oracle of this form simplifies the problem, because it only requires one oracle call to prepare an $s$-sparse state from a dimension $s$ state with the same coefficients \cite{Berry09}. Therefore the complexity of preparing an $s$-sparse state is also $O(\sqrt{s})$.

Let us encode the state in three registers (which may themselves be composed of multiple qubits).
The first is one qubit encoding $\ket{0}$ for the times $t_1,\ldots,t_{N_t/2}$, and $\ket{1}$ for the times $t_0,t_{N_t/2+1},\ldots,t_{N_t}$.
The second register provides the remainder of the encoding of the time.
The third register is of dimension $N_x$, and encodes $\bm{b}$ or $\bm{x}_{\rm in}$.

By performing a rotation on the first qubit, based on the normali{\s}ations of $\bm{x}_{\rm in}$ and $\bm{b}$, we obtain the correct relative weighting of the two time ranges.
Then, conditional on the qubit being in the state $\ket{0}$, we can prepare a superposition of times $t_1,\ldots,t_{N_t/2}$ in the second register, as well as a state encoding $b$ in the third register.
Conditional on the qubit being in the state $\ket{1}$, we can prepare the time $t_0$ in the second register, and $\bm{x}_{\rm in}$ in the third register.
The complexity of these controlled state preparations will be $O(\sqrt{s})$.
The complexity of preparing the superposition of times can be made $O(\log(N_t))$ simply by choosing $N_t$ to be a power of two (which does not change the scaling).
\end{proof}

We now translate the complexity of solving linear systems into the complexity of obtaining a state corresponding to the solution of the differential equation.

\begin{theorem} \textbf{(Main result)}
\label{thm:final}
Suppose that the multistep method is order $p$ and $A(\alpha)$-stable, the matrix $A$ may be diagonalised as $A=VDV^{-1}$, the eigenvalues of $A$ all satisfy $|\arg(-\lambda_i)|\le\alpha$, and
\begin{align}
\label{eq:varcon}
\max_{t\in[t_0,t_0+\Delta t]}\|\bm{x}(t)\| &= O(\|\bm{x}(t_0+\Delta t)\|), \\
\label{eq:xfbig}
\epsilon &= o(\|\bm{x}_{\rm in}\|).
\end{align}
Then a state encoding the solution of the differential equation at time $t_0+\Delta t$ to within trace distance $\epsilon$ may be obtained using
\begin{equation}
\label{eq:main}
\tilde O\left(\log(N_x)s^{9/2} (\|A\|\Delta t)^{2+2/p}\kappa_V^{5}(\|\bm{x}_{\rm in}\|+\|\bm{b}\|/\|A\|)/\epsilon^{2}\right)
\end{equation}
calls to the oracles for $A$, $\bm{b}$, and $\bm{x}_{\rm in}$.
\end{theorem}

\begin{proof}
There are two main issues that need to taken into account in determining the complexity of obtaining the solution of the differential equation.
The first is the relation between the allowable error for solving the differential equation and the allowable error for the solution of the linear systems,
and the second is the probability of obtaining the correct time.

The error in the algorithm for the solution of linear systems \cite{Harrow09} is the error in the state produced.
Denoting the exact desired \emph{normali{\s}ed} (indicated by the tilde) state by $\ket{\tilde\psi}$ and the actual state produced by $\rho$, we therefore have
\begin{equation}
\delta(\rho,\ket{\tilde\psi}\bra{\tilde\psi})\le \epsilon_L,
\end{equation}
where $\delta$ denotes the trace distance.
The trace distance is non-increasing under channels, so the channel that (non-destructively) measures the time does not increase the trace distance.
Hence
\begin{align}\label{eq:dist}
\epsilon_L & \ge \delta\left(\sum_{j=0}^{N_t}\ket{t_j}\bra{t_j}\bra{t_j}\rho\ket{t_j},\sum_{i=0}^{N_t}\ket{t_j}\bra{t_j}\braket{t_j}{\tilde\psi}\braket{\tilde\psi}{t_j}\right) \nn
& = \sum_{j=0}^{N_t} \delta\left(\bra{t_j}\rho\ket{t_j},\frac{\ket{\bm{x}_j}\bra{\bm{x}_j}}{\braket{\psi}{\psi}}\right) \nn
& \ge  \delta\left(\sum_{j=N_t/2}^{N_t}\bra{t_j}\rho\ket{t_j},\frac{(N_t/2+1) \ket{\bm{x}_{N_t}}\bra{\bm{x}_{N_t}}}{\braket{\psi}{\psi}}\right).
\end{align}
To obtain the correct final state, one would need to measure the time in the ancilla register, and obtain a value in the range $t_0+\Delta t$ to $t_0+2\Delta t$.
This is equivalent to obtaining $j$ in the range $N_t/2$ to $N_t$.

The probability of successfully obtaining the time in the correct range is given by
\begin{equation}
p_{\rm succ} :=\sum_{j=N_t/2}^{N_t}{\rm Tr}(\bra{t_j}\rho\ket{t_j}).
\end{equation}
Then the state we obtain after measuring the time in the correct range is
\begin{equation}
\rho_{\rm post} =  p_{\rm succ}^{-1}
\sum_{j=N_t/2}^{N_t}\bra{t_j}\rho\ket{t_j}.
\end{equation}
It is also convenient to define
\begin{equation}
p_{\rm ex} :=\frac{(N_t/2+1) \braket{\bm{x}_{N_t}}{\bm{x}_{N_t}}}{\braket{\psi}{\psi}}.
\end{equation}
This is the probability of success if the linear system was solved exactly.
Then Eq.~\eqref{eq:dist} yields
\begin{equation}
\delta\left(p_{\rm succ}, p_{\rm ex} \right) \le \epsilon_L ,
\end{equation}
and hence
\begin{equation}
\delta\left(p_{\rm succ}p_{\rm ex}^{-1},1\right) \le \epsilon_L p_{\rm ex}^{-1}.
\end{equation}

Next, using the triangle inequality for trace distance, we obtain
\begin{align}\label{eq:trbnd}
\delta\left(\rho_{\rm post} , \ket{\tilde{\bm{x}}_{N_t}}\bra{\tilde{\bm{x}}_{N_t}}\right) &\le
\delta\left( p_{\rm succ}p_{\rm ex}^{-1},1\right)
+\delta\left(p_{\rm ex}^{-1}\sum_{j=N_t/2}^{N_t}\bra{t_j}\rho\ket{t_j},\ket{\tilde{\bm{x}}_{N_t}}\bra{\tilde{\bm{x}}_{N_t}}\right) \nn
&= O\left(\epsilon_L p_{\rm ex}^{-1}\right),
\end{align}
where $\ket{\tilde{\bm{x}}_{N_t}}$ is the normali{\s}ed form of $\ket{\bm{x}_{N_t}}$.
Next we consider how to place a lower bound on $p_{\rm ex}$.
The normali{\s}ation of the state $\ket{\psi}$ is given by
\begin{align}
\label{eq:norm}
\braket{\psi}{\psi} &= \sum_{j=0}^{N_t} \|\bm{x}_j\|^2 \nn
&= \sum_{j=0}^{N_t} [\|\bm{x}(t_0+\dt j)\|+O(\epsilon)]^2 \nn
&= O\left(N_t\max_{t\in[t_0,t_0+\Delta t]}\|\bm{x}(t)\|^2\right) \nn
&= O\left(N_t\|\bm{x}(t_0+\Delta t)\|^2\right).
\end{align}
Here we have bounded the error in $\bm{x}(t)$ all times by $\epsilon$.
This is because we choose parameters that bound the error at time $t_0+\Delta t$ by $\epsilon$.
The bound on the error increases monotonically with time, so the error at earlier times will also be bounded by $\epsilon$.
The conditions \eqref{eq:varcon} and \eqref{eq:xfbig} ensure that $\epsilon=o(\|\bm{x}(t_0+\Delta t)\|)$.
Using this result for $\braket{\psi}{\psi}$ we obtain
\begin{equation}
p_{\rm ex} = \Omega \left( \frac{\braket{\bm{x}_{N_t}}{\bm{x}_{N_t}}}{\|\bm{x}(t_0+\Delta t)\|^2}\right) = \Omega(1).
\end{equation}
As a result, Eq.~\eqref{eq:trbnd} gives
\begin{align}
\delta\left(\rho_{\rm post} , \ket{\tilde{\bm{x}}_{N_t}}\bra{\tilde{\bm{x}}_{N_t}}\right) = O\left(\epsilon_L \right).
\end{align}
To obtain trace distance error $O(\epsilon)$, we can therefore take $\epsilon_L = \Theta(\epsilon)$.
Moreover, because $\delta\left(p_{\rm succ}, p_{\rm ex} \right) \le \epsilon_L$, the probability of success is also $\Omega(1)$.

Using $\epsilon_L = \Theta(\epsilon)$  in the scaling from Ref.\ \cite{Harrow09}, together with $\kappa=O(N_t \kappa_V)$ from Theorem \ref{thm:conthm}, and the complexity of state preparation from Lemma \ref{lem:prep}, the number of oracle calls is
\begin{equation}
\tilde O(\log(N_x)s^{9/2} N_t^{2} \kappa_V^2 /\epsilon).
\end{equation}
Note that we can omit $\log(N_t)$ from Lemma \ref{lem:prep}, because the $\tilde O$ notation omits logarithmic factors. For the same reason, we have replaced $N=N_x N_t$ with $N_x$.
We use a value of $N_t$ that is sufficient to ensure that the error is no greater than $\epsilon$, and that $\dt=O(1/\|A\|)$, which is a condition needed to use Theorem \ref{thm:conthm}. If we take
\begin{equation}
N_t = \Theta \left( \|A\|\Delta t \sqrt{\frac{\kappa_V^3(\|\bm{x}_{\rm in}\|+\|\bm{b}\|/\|A\|)}{\epsilon}}
+ (\|A\|\Delta t)^{1+1/p} \left( \frac{\kappa_V^2(\|\bm{x}_{\rm in}\|+\|\bm{b}\|/\|A\|)}{\epsilon} \right)^{1/p} \right),
\end{equation}
then using Lemma \ref{lem:ersca}, the error will be bounded by $\epsilon$.
In addition, because $\epsilon=o(\|\bm{x}_{\rm in}\|)$, we obtain $N_t = \Omega (\|A\|\Delta t)$, which ensures that $\dt=O(1/\|A\|)$.

We can simplify the result by taking
\begin{equation}
N_t = \Theta \left( (\|A\|\Delta t)^{1+1/p} \sqrt{\frac{\kappa_V^3(\|\bm{x}_{\rm in}\|+\|\bm{b}\|/\|A\|)}{\epsilon}}
\right).
\end{equation}
Then the overall scaling of the number of black-box calls is as in Eq.~\eqref{eq:main}.
\end{proof}

This result is somewhat conservative, because we have included the term due to the error in starting the linear multistep method. If we assume that this error is negligible, then we obtain
\begin{equation}
\tilde O\left(\log(N_x)s^{9/2} (\|A\|\Delta t)^{2+2/p}\kappa_V^{2+4/p}(\|\bm{x}_{\rm in}\|+\|\bm{b}\|/\|A\|)^{1/p}/\epsilon^{1+2/p}\right).
\end{equation}
This has the same scaling in $\|A\|\Delta t$, but improved scaling in other quantities.

\section{Variable time amplitude amplification}
\label{sec:var}
The algorithm that has been presented scales close to $\Delta t^{2}$; it is likely that this is suboptimal, because the lower bound is linear scaling.
The factor that increases the scaling with $\Delta t$ is that the HHL algorithm scales as $\kappa^2$, whereas the lower bound is linear scaling in $\kappa$.
A proposed technique to improve the scaling to close to linear in $\kappa$ was given in \cite{Ambainis10b,Ambainis12}.
There the scaling was (see Theorem 3 of \cite{Ambainis10b})
\begin{equation}
O\left( \frac{\kappa \log^3\frac{\kappa}{\epsilon_L}}{\epsilon_L^3} \log^2 \frac 1{\epsilon_L}\right).
\end{equation}
Here only scaling in $\kappa$ and $\epsilon_L$ has been included.
As above, $\kappa$ scales linearly in $N_t$, which is close to linear in $\Delta t$.
This would appear to yield an algorithm with scaling close to linear in $\Delta t$.

However, there is an additional contribution to the complexity from the \textbf{Estimate} procedure that prevents this scaling being obtained. 
In Algorithm 3 of Ref.~\cite{Ambainis10b} the \textbf{Estimate} procedure is used to determine $p_i$ from algorithm $\B_i$.
That is then used to determine the value of $k$ to use in Algorithm 2 to define algorithm $\A_i$.
Then $\A_i$ is used to define $\B_{i+1}$ and so forth in a recursive way.

To determine the complexity for this recursive procedure, one needs to find the complexity for $\A_i$, denoted $T_i$, in terms of $T_{i-1}$.
In point 1 of the proof of Lemma 2 of Ref.~\cite{Ambainis10b} the complexity due to amplitude amplification is taken account of, but not the complexity of the \textbf{Estimate} procedure.
To obtain a bound on the contribution of the complexity of the \textbf{Estimate} procedure, note that the complexity in Theorem 4 of Ref.~\cite{Ambainis10b} scales at least as $1/\sqrt{p}$, and from Algorithm 2, $1/p$ should be at least $m$.
Multiplying the factor of $\sqrt{m}$ by itself $m$ times to account for the number of steps of the recursion yields $m^{m/2}$.
As $m=\log T_{\rm max}$, $m^{m/2}$ scales as $T_{\rm max}^{(\log \log T_{\rm max})/2}$.
As used for solving linear systems, $T_{\rm max} =O( \kappa/\epsilon_L$), which would yield scaling as $(\kappa/\epsilon_L)^{[\log\log(\kappa/\epsilon_L)]/2}$.
In the limit of large $\kappa/\epsilon_L$, the double log becomes large, and the scaling is significantly larger than linear in $\kappa/\epsilon_L$.
As a result it appears that variable time amplitude amplification does not yield improved scaling in $\Delta t$.

\section{Conclusions}
A quantum computer may be used to solve sparse systems of linear differential equations, provided the result may be encoded in a quantum state, rather than given explicitly. By encoding the differential equation as a linear system, and using the algorithm of Ref.\ \cite{Harrow09} for solving linear systems, the complexity is (including only scaling in $\|A\|$ and $\Delta t$),
\begin{equation}
\tilde O\left((\|A\|\Delta t)^{2}\right).
\end{equation}
This improves upon previous results for nonlinear differential equations, which were exponential in the number of time steps \cite{Leyton08}. This algorithm has an enormous range of possible applications, because large systems of differential equations are ubiquitous in science and engineering.  In particular they arise from the discreti{\s}ation of partial differential equations.
There is an interesting open question of whether the scaling can be improved, because the best known lower bound is $O(\|A\|\Delta t)$ from the no--fast-forwarding theorem.

These results are for constant coefficients, because that enables an analytic error analysis. This approach can also be used to solve linear differential equations with time-dependent coefficients, though the error analysis will be more difficult.
Another interesting direction for future research is the problem of nonlinear differential equations. It is likely that the exponential scaling obtained in Ref.\ \cite{Leyton08} is fundamental, because quantum mechanics is linear. However, it may potentially be possible to improve the constant of the exponential scaling.

\acknowledgements
The author is grateful for enlightening discussions with Andrew Childs, Barry Sanders, Ish Dhand, and Jon Tyson.
The author is funded by an Australian Research Council Future Fellowship (FT100100761).

\end{document}